\documentclass[12pt,a4paper]{article}
\usepackage[dvipdfmx,hiresbb]{graphicx}
\usepackage[dvipdfmx]{color}

\usepackage{color}
\usepackage{bm,enumerate,amsmath,amssymb,amsthm}
\usepackage{epsfig}
\usepackage{cite}
\usepackage{algorithm}
\usepackage{algorithmic}
\usepackage{txfonts}
\usepackage{subfigmat}
\usepackage{figsize}
\usepackage{here}
\usepackage{listings}
\usepackage{braket}
\usepackage{comment}
\usepackage{lscape}
\usepackage{bigints}

\usepackage{arxiv}
\usepackage[utf8]{inputenc} 

\theoremstyle{definition}
\newtheorem{theorem}{Theorem}
\newtheorem*{theorem*}{Theorem}

\newtheorem*{definition*}{Definition}

\newtheorem{remark}{Remark}

\newtheorem{assumption}{Assumption}

\renewcommand{\headeright}{}

\title{Lie-Algebraic Analysis of Generators: Approximation-Error Bounds and Barren-Plateau Heuristics}

\author{
  Hiroshi Ohno \\
  Toyota Central R \& D Labs., Inc.\\
  Aichi, Japan \\
  \texttt{oono-h@mosk.tytlabs.co.jp} \\
}

\date{\empty}

\begin{document}
\maketitle

\begin{abstract}
Lie algebras provide a useful framework for theoretical analysis in quantum machine learning, particularly in hybrid quantum-classical learning.
From the viewpoint of function approximation, expectation values of parameterized quantum circuits can be viewed as trigonometric polynomials whose accessible Fourier modes are determined by the spectra of the generators.
In this study, we describe:
(1) a minimax lower bound on the $ L^{2} $-approximation error over a Sobolev ball when the circuit's effective frequency set is contained in a radius-$K$ ball, which yields a scaling law of the form $ \Omega(K^{\frac{d}{2} - r}) $ for $ r > \frac{d}{2} $ (assuming the target function belongs to the Sobolev space $ W_2^{r}(\mathbb{T}^{d}) $),
and we also derive a Jackson-type upper bound on the approximation error of quantum circuits under Sobolev regularity of the target function, expressed in terms of an effective bandwidth determined by generator spectral gaps;
(2) a generator-selection rule motivated by enlarging the effective frequency set via non-commuting generators; and (3) a simple heuristic metric based on the trace component of generators, aimed at characterizing training behaviors related to barren plateaus.
Simulation experiments on toy problems illustrate the practical implications of the frequency-spectrum perspective and the proposed heuristics.
\end{abstract}

\section{Introduction}
Lie algebras are effective tools for analyzing the behavior of parameterized quantum circuit (unitary matrix) on training process.
Previous studies have demonstrated that Lie algebras and dynamical Lie algebras (DLAs) constitute a promising analytical framework for barren plateaus (BPs) \cite{larocca2025}, generalization performance, and expressive power of parameterized unitary matrices.
Furthermore, leveraging Lie algebras provides a framework for efficient classical simulation of quantum circuits \cite{goh2025}.

We provide a brief review of the related work.
In Ref. \cite{mcclean2018}, the authors showed that the gradient of the loss function vanishes exponentially with the number of qubits, and argued that this behavior is caused by the approximate 2-design property of random circuits.
When quantum circuits exhibit (approximate) Haar randomness, such vanishing gradients can arise, potentially leading to BPs during training.
Ragone et al. \cite{ragone2024} showed that the variance of the loss function is determined by the $ \mathfrak{g} $-purities of the measurement operator and the initial state $\rho$, as well as by the dimension of the DLA of the circuit's generators.
They further argued that when the DLA (a reductive Lie algebra) is abelian, the loss landscape is completely flat.
In other words, the DLA then consists only of its center.
Allcock et al. \cite{allcock2024} investigated the role of the center in a DLA for the Quantum Approximate Optimization Algorithm (QAOA).
For cycle and complete graphs, the presence of a nontrivial center implies that the variance of the loss function with respect to the parameters scales as $ \Theta(1) $. 
Consequently, the effect of BPs can vanish during QAOA training.

We speculate that the center in a DLA offers a promising perspective on barren plateaus.
However, practical design metrics for choosing generators that exploit the center are still lacking.

From the perspective of function approximation, parameterized quantum circuits can be regarded as a form of partial Fourier series.
In Ref. \cite{schuld2021}, the authors investigated the expressive power of parameterized quantum circuits as function approximators.
They proved that there exist quantum circuits capable of realizing arbitrary sets of Fourier coefficients, and that when the accessible frequency spectrum is asymptotically rich enough, such circuits become universal function approximators.

However, existing analyses of expressivity based on accessible Fourier modes typically do not provide a worst-case (minimax) guarantee on the approximation error for a given restriction of the spectrum.
In this study, we address this gap as follows. 
First, we formalize the expectation value of a parameterized unitary as a trigonometric polynomial whose effective frequency set $ S_{\rm eff} $ is determined by eigenvalue differences of the circuit generators.
Under a radius constraint $ \| u \| \le K $ on $ S_{\rm eff} $, we prove a minimax lower bound on the $ L^{2} $-approximation error over a Sobolev ball (Theorem \ref{th1}).
This result clarifies how spectral width (through $ K $) fundamentally limits the worst-case approximation capability of the circuit class.
Additionally, we derive a Jackson-type upper bound on approximation errors in terms of the circuit's effective bandwidth induced by generator spectra.
Second, motivated by the dependence of $ K $ (and thus the approximation limit) on the generator spectra, we propose a generator-selection rule based on enlarging $ S_{\rm eff} $, for example via non-commuting generators.
Third, from the Lie-algebraic perspective on training landscapes, we introduce a simple heuristic metric based on the trace component of generators and evaluate its behavior in toy experiments.

\section{Related Work}
\subsection{Quantum models as Fourier series}
A widely used perspective in quantum machine learning (QML) is that expectation values of parameterized quantum circuits can be written as Fourier-type sums, where the accessible frequencies are constrained by circuit structure and, in particular, by the spectra of generators.
Schuld et al. \cite{schuld2021} analyzed supervised quantum models as (partial) Fourier series and showed that the choice and repetition of encoding/processing gates control the frequency spectrum that the model can represent.
Subsequent works investigated how well common ans\"atze learn (truncated) Fourier series and compared architectures from an empirical standpoint \cite{heimann2025}.
Extensions to multidimensional settings have also been studied, clarifying how accessible frequency sets generalize beyond one-dimensional inputs \cite{casas2023}.
More recently, constraints on Fourier coefficients and phenomena such as vanishing expressivity have been reported, linking the decay of Fourier content to growing system size under certain architectures \cite{mhiri2025}.
These lines of work motivate analyzing expressivity and trainability through the lens of accessible spectral components.

\subsection{Fourier viewpoints and barren plateaus}
While the BP phenomenon is traditionally defined via the variance of loss gradients, Fourier-based analyses have also begun to connect spectral properties to trainability.
For example, Fourier-coefficient restrictions under BP-type conditions have been discussed, suggesting that limited Fourier energy can correlate with vanishing signals in parameterized circuits \cite{okumura2025}.
Such results complement the standard gradient-variance viewpoint by providing an alternative characterization in terms of spectral content.

\subsection{Lie-algebraic analyses of expressivity and trainability}
DLAs have emerged as a principled tool to characterize the reachable unitary families of variational quantum algorithms and to quantify expressivity and trainability.
A recent Lie-algebraic theory provides exact expressions for the variance of sufficiently deep circuit losses (and extensions to noise models), unifying multiple known sources of BPs within a single framework \cite{ragone2024}.
Within this program, the symmetry and representation-theoretic structure of the DLA plays a central role; in particular, reductive DLAs and nontrivial centers have been linked to the absence of BPs in structured ans\"atze.
For QAOA, analytical studies have derived explicit DLA decompositions including a direct-sum structure into a semisimple part and a low-dimensional center for certain highly symmetric graphs, together with closed-form variance expressions demonstrating the absence of BPs \cite{allcock2024}.

\subsection{Positioning of the present work}
Building on the above, our work focuses on connecting (i) the frequency-spectrum view of expectation values and (ii) Lie-algebraic structural information, with the goal of extracting practically interpretable design implications.
In contrast to prior Fourier-series expressivity studies that primarily emphasized representability of Fourier components \cite{schuld2021,heimann2025,casas2023}, we emphasize how generator spectra induce an effective frequency-radius constraint and how such a constraint yields a worst-case (minimax) approximation limit over smooth function classes.
In parallel to Lie-algebraic BP theories that predict loss/gradient concentration in deep regimes \cite{ragone2024}, we investigate simple generator-dependent heuristics intended to be computable directly from circuit specifications, and we validate their relationship to gradient-variance behavior in controlled experiments.

\section{Method}

\subsection{Approximation error of parameterized quantum circuits}
In this section, we describe the approximation error of quantum circuits in the context of Lie algebras.

Firstly, generator $ H $ can be decomposed using an eigenvalue decomposition as follows:
\begin{equation}
  H = \sum_{p} \lambda_{p} \ket{p}\bra{p}.
\end{equation}
Let us consider a unitary $ U $ with one parameter $ \theta $ as follows:
\begin{equation}
  U(\theta) = \exp{(-i \theta H)}.
\end{equation}
Then, the expectation $ f $ of $ U $ is as follows:
\begin{equation}
  f(\theta) = \braket{ \phi | U(\theta)^{\dagger} O U(\theta) | \phi }.
\end{equation}
Thus,
\begin{equation}
  \begin{split}
    U(\theta) &= \exp{(-i \theta H)} = \exp{\left( -i \theta \sum_{p} \lambda_{p} \ket{p}\bra{p} \right)}\\
    &= \sum_{p} \exp{( -i \theta \lambda_{p} )} \ket{p}\bra{p},
  \end{split}
\end{equation}
and
\begin{equation}\label{eq5}
  \begin{split}
    f(\theta) &= \braket{ \phi | U(\theta)^{\dagger} O U(\theta) | \phi }\\
    &= \braket{ \phi | \sum_{p} \exp{( i \theta \lambda_{p} )} \ket{p}\bra{p} O \sum_{q} \exp{( -i \theta \lambda_{q} )} \ket{q}\bra{q} \phi }\\
    &= \sum_{p, q} \rho_{p, q} O_{p, q} \exp{( -i \theta (\lambda_{q} - \lambda_{p}) )},
  \end{split}
\end{equation}
where
\begin{equation}
  \rho_{p, q} \coloneqq \braket{p|\phi}\braket{q|\phi},
\end{equation}
and
\begin{equation}
  O_{p, q} \coloneqq \braket{p|O|q}.
\end{equation}
For a circuit of depth $ L $ and block size $ m $, we denote the parameters by $ \theta = \{ \theta_{l,j} \}_{1 \le l \le L,\,1 \le j \le m} \in \mathbb{R}^{d}$ with $ d = L \cdot m $, and the function with depth $ L $ and size $ m $ is as follows:
\begin{equation}
  f(\theta) = \prod_{l=1}^{L} \prod_{j=1}^{m} \sum_{p,q} \rho_{p, q} O_{p, q} \exp{( -i \theta_{l,j} (\lambda^{(l,j)}_{q} - \lambda^{(l,j)}_{p}) )}.
\end{equation}
For each $ (l,j) $, let $ H_{l,j} $ be the corresponding generator and let $ \{ \lambda^{(l,j)}_{p} \}_{p} $ be its eigenvalues.
Define the spectral-gap set
\begin{equation}
  \Delta_{l,j} \coloneqq \{ \lambda^{(l,j)}_{q} - \lambda^{(l,j)}_{p} \; \mid \; p,q \}.
\end{equation}

Additionally, for each $ (l,j) $ define the maximum spectral width
\begin{equation}
  \omega^{(l,j)}_{\max} \coloneqq \max_{\omega \in \Delta_{l,j}} | \omega |,
  \qquad
  \omega_{\max} \coloneqq \max_{l,j} \omega^{(l,j)}_{\max}.
\end{equation}

In Eq. \ref{eq5}, the $ \theta $-dependence of $ f $ is governed by spectral gaps $ \lambda_{q} - \lambda_{p} $ of the generator(s), which are real-valued in general.
To connect this with the standard Fourier basis on $ \mathbb{T}^{d} $ (integer frequencies), we adopt the following mild normalization.

\begin{assumption}[Commensurate gaps / angle normalization]\label{assmp1}
  For each parameter $ \theta_{j} $ (corresponding to a generator $ H_{j} $), there exists a scale $ \gamma_{j} > 0 $ such that every spectral gap of $ H_{j} $ belongs to $ \gamma_{j} \mathbb{Z}$, i.e., $ \Delta_{j} \coloneqq \{ \lambda - \lambda' : \lambda, \lambda' \in \mathrm{spec}(H_{j}) \} \subset \gamma_{j}\mathbb{Z} $.
  Define the rescaled angles $ \varphi_{j} \coloneqq \gamma_{j} \theta_{j}$ and $ \varphi = (\varphi_{1},\dots,\varphi_{d}) $.
\end{assumption}

Under Assumption \ref{assmp1}, $ f(\theta) $ becomes $ 2\pi $-periodic in each $ \varphi_{j} $, hence can be regarded as a function on the torus $ \mathbb{T}^{d} \equiv [0, 2\pi]^{d} $, and admits the standard Fourier expansion with integer multi-indices.
In this study, the target function $ h $ belongs to the Sobolev space $ W_2^{r}(\mathbb{T}^{d}) $ with $ r > \frac{d}{2} $.
From here on, we work with the rescaled angles $ \varphi $ in Assumption \ref{assmp1} and identify the parameter domain with the torus $ \mathbb{T}^{d} = [0, 2\pi]^{d} $.
Accordingly, any $ h \in L^{2}(\mathbb{T}^{d})$ admits the Fourier series $ h(\varphi) = \sum_{s \in \mathbb{Z}^{d}} b_s \exp{(-is \cdot \varphi)} $, and we define $ W_{2}^{r}(\mathbb{T}^{d})$ in the usual manner.
Then, according to function approximation theory, we obtain the following theorem.
\begin{theorem} (Minimax approximation-error lower bound) \label{th1}
  Let $ d \in \mathbb{N} $ and $ r > \frac{d}{2} $.
  Define the Sobolev ball
  \begin{equation}
    \mathcal{B}_{r} \coloneqq \{ h \in W_{2}^{r}(T^{d}) \; \mid \; \| h \|_{W_{2}^{r}} \le 1\}.
  \end{equation}
  For $ K \ge 1$, define the class of trigonometric polynomials with Fourier support inside the radius-$ K $ $ \ell_{2} $-ball by
  \begin{equation}
    \mathcal{F}_{K} \coloneqq \left\{ f(\varphi) = \sum_{\| s \|_{2} \le K} a_{s} \exp{(-i s \cdot \varphi)} \; \mid \; \ a_{s} \in \mathbb{C} \right\}.
  \end{equation}
  Then there exists a constant $ c = c(d,r) > 0 $ such that
  \begin{equation}\label{eq18}
    \sup_{h \in \mathcal{B}_{r}}\ \inf_{f \in \mathcal{F}_{K}}\ \|f - h\|_{L^{2}(\mathbb{T}^{d})} \, \ge \, c \, K^{\frac{d}{2} - r} = c \, K^{-\alpha},
  \end{equation}
  where $ \alpha \coloneqq r - \frac{d}{2} > 0 $.
  Moreover, if the expectation value of a parameterized unitary satisfies $ S_{\rm eff} \subseteq \{u \in \mathbb{R}^{d} \; \mid \; \| u \|_{2} \le K \}$, then $ \mathcal{F}_{S_{\rm eff}} \subseteq \mathcal{F}_{K}$ and thus
  \begin{equation}\label{eq19}
    \sup_{h \in \mathcal{B}_{r}} \; \inf_{f \in \mathcal{F}_{S_{\rm eff}}} \; \|f - h\|_{L^{2}(\mathbb{T}^{d})} \, \ge \, c \, K^{-\alpha}.
  \end{equation}
  In particular, using $ K \le \sqrt{d} \, \omega_{\rm max}$ (or the rough bound $K \lesssim Lm \, \omega_{\rm max}$),
  \begin{equation}\label{eq20}
    \sup_{h \in \mathcal{B}_{r}} \; \inf_{f \in \mathcal{F}_{S_{\rm eff}}} \; \|f - h\|_{L^{2}(\mathbb{T}^{d})} \, \ge \, c \, (\sqrt{d} \, \omega_{\rm max})^{-\alpha}
    \quad \left(\text{or } \, \ge c \,(Lm \, \omega_{\rm max})^{-\alpha} \right).
  \end{equation}
\end{theorem}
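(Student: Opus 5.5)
The plan is to reduce the minimax quantity to an explicit Fourier-tail computation, exhibit a single extremal target in $\mathcal{B}_r$, and then get \eqref{eq19} and \eqref{eq20} by monotonicity in the frequency set. One step, the one that produces the factor $K^{d/2}$, is doubtful under the standard Sobolev norm, and I flag it below.

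\emph{Step 1 (Parseval reduction).} For $h=\sum_s b_s e^{-is\cdot\varphi}$, the infimum over $\mathcal{F}_K$ is attained by the orthogonal projection onto $\{\|s\|_2\le K\}$. Up to the factor $(2\pi)^{d}$, this gives
\[
\inf_{f\in\mathcal{F}_K}\|f-h\|_{L^2}^2=\sum_{\|s\|_2>K}|b_s|^2 .
\]
The problem is thus to maximize this tail subject to $\|h\|_{W_2^r}\le 1$, where I take $\|h\|_{W_2^r}^2=\sum_s(1+\|s\|_2^2)^{r}|b_s|^2$.

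\emph{Step 2 (shell construction).} Let $A_K=\{s\in\mathbb{Z}^d: K<\|s\|_2\le 2K\}$, which has $|A_K|\asymp K^d$ points for $K\ge1$ by a lattice-point count. I take $h_K$ with equal coefficients $b_s=\beta$ on $A_K$ and zero elsewhere. Its squared Sobolev norm is $\asymp K^{2r}K^d\beta^2$. Its squared tail, which is all of its energy, is $K^d\beta^2$.

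\emph{Step 3 (where $K^{d/2}$ should enter).} Normalizing the Sobolev norm to $1$ gives $\beta\asymp K^{-r-d/2}$, so the tail is only $\asymp K^{-r}$. To get the stated $cK^{d/2-r}$, the normalization must instead allow $\beta\asymp K^{-r}$ per mode. Then summing $|A_K|\asymp K^d$ modes gives squared tail $\asymp K^{d-2r}$, that is, an error $\asymp K^{d/2-r}$. This step is the main obstacle.

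The count shows that $K^{d/2-r}$ arises exactly when the constraint acts coefficientwise, $|b_s|\lesssim(1+\|s\|_2)^{-r}$, rather than through the $\ell^2$-weighted sum. Under the standard norm from Step 1, the projection gives the matching upper bound $\|h-P_Kh\|_{L^2}\le K^{-r}\|h\|_{W_2^r}$. That upper bound contradicts \eqref{eq18} for large $K$, since $K^{d/2-r}\gg K^{-r}$.

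So I would first pin down exactly what "the usual manner" means in the paper's definition of $W_2^r$ and check whether the intended norm is the coefficientwise one. If it is the standard $\ell^2$ norm, Steps 1–2 give only $cK^{-r}$ and \eqref{eq18} cannot hold as stated. In that case I would report this as a correction rather than force the exponent.

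\emph{Step 4 (consequences).} Every $f$ whose Fourier support lies in $S_{\rm eff}$ is a trigonometric polynomial with frequencies in the $K$-ball. Hence $S_{\rm eff}\subseteq\{\|u\|_2\le K\}$ gives $\mathcal{F}_{S_{\rm eff}}\subseteq\mathcal{F}_K$. An infimum over a smaller set is larger, so \eqref{eq19} follows.

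For \eqref{eq20}, I argue per coordinate. After rescaling, each frequency coordinate of $s$ is a gap of $H_{l,j}$, so each coordinate is at most $\omega_{\max}$ in absolute value, which gives $\|s\|_2\le\sqrt d\,\omega_{\max}$. The rough bound $K\lesssim Lm\,\omega_{\max}$ follows from $\sqrt d\le d=Lm$. Since $K^{-\alpha}$ is decreasing in $K$, I can substitute $K=\sqrt d\,\omega_{\max}$ (respectively $K=Lm\,\omega_{\max}$) into the bound.
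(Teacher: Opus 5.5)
Your Step 3 objection is correct, and it is fatal to the statement rather than to your proposal. The paper answers your ``pin down'' question explicitly: it writes $\|h\|_{W_2^r}\le 1\Rightarrow\sum_s(1+\|s\|_2^2)^r|b_s|^2\le 1$, which is the standard weighted-$\ell^2$ norm. Any norm equivalent to it gives the same ceiling, so no reading of $W_2^r$ as a Sobolev space rescues \eqref{eq18}. For every $h\in\mathcal{B}_r$,
\[
\inf_{f\in\mathcal{F}_K}\|f-h\|_{L^2}^2=\sum_{\|s\|_2>K}|b_s|^2\le(1+K^2)^{-r}\sum_{\|s\|_2>K}(1+\|s\|_2^2)^r|b_s|^2\le K^{-2r}.
\]
So the left side of \eqref{eq18} is at most $K^{-r}$, and this is below $cK^{d/2-r}$ once $K>c^{-2/d}$. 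No constant $c(d,r)>0$ works.

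The paper's proof is exactly your Steps 1--2. It uses the Parseval reduction, the annulus $A_K$, and $b_s=c_0(1+\|s\|_2^2)^{-r/2}$ on $A_K$, with $c_0\asymp K^{-d/2}$ forced by the norm constraint. It correctly reaches squared error $\asymp c_0^2|A_K|K^{-2r}\asymp K^{-2r}$. It then jumps to $cK^{d/2-r}$ via a footnote setting $c=c_1K^{d/2+r}$, which is a $K$-dependent ``constant''; that is precisely the step you declined to force.

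The true minimax rate is $\Theta(K^{-r})$, and the lower bound needs no lattice counting. Take the single mode $h=(1+\|s_0\|_2^2)^{-r/2}e^{-is_0\cdot\varphi}$ with $s_0=(\lfloor K\rfloor+1,0,\dots,0)$. It has $\|h\|_{W_2^r}=1$ and $\|s_0\|_2>K$, so its error is $(1+\|s_0\|_2^2)^{-r/2}\ge 5^{-r/2}K^{-r}$. Replacing $K^{-\alpha}$ by $K^{-r}$, your Step 4 monotonicity argument gives the corrected forms of \eqref{eq19} and \eqref{eq20} verbatim. This requires $S_{\rm eff}\subset\mathbb{Z}^d$ in the rescaled angles of Assumption~\ref{assmp1}, so that $\mathcal{F}_{S_{\rm eff}}\subseteq\mathcal{F}_K$ makes sense, and $\sqrt d\,\omega_{\max}\ge 1$ measured in those units.

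Your coefficientwise remark also explains where the exponent comes from. Each term of the Sobolev sum is at most $1$, so $\mathcal{B}_r$ sits inside the larger class $\{|b_s|\le(1+\|s\|_2^2)^{-r/2}\}$. The worst-case error over that class is $\asymp K^{d/2-r}$, and its elements are in $L^2$ only when $r>d/2$, which is the theorem's hypothesis. Passing to the larger class is legitimate for an upper bound, and the proof of Theorem~\ref{th2} does exactly this with $|b_s|^2\lesssim\|s\|^{-2r}$, giving a valid but non-sharp $K^{d/2-r}$. It cannot be turned into a lower bound over $\mathcal{B}_r$: the shell function with $b_s=(1+\|s\|_2^2)^{-r/2}$ on $A_K$ has $\|h\|_{W_2^r}=|A_K|^{1/2}\asymp K^{d/2}$. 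Rescaling it into $\mathcal{B}_r$ costs exactly the factor $K^{d/2}$.
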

\begin{proof}
  From the spectral decomposition, the expectation value of a one-parameter unitary contains only Fourier modes given by eigenvalue differences.
  Extending this to the multi-parameter setting, the expectation value $ f(\varphi) = \braket{ \phi | U(\varphi)^{\dagger} O U(\varphi) | \phi } $ is a finite trigonometric polynomial of the form
  \begin{equation}
    f(\varphi) = \sum_{u \in S_{\rm eff}} a_{u} \, \exp{(-i u \cdot \varphi)},
    \qquad a_{u} \in \mathbb{C}, \; \varphi \in \mathbb{T}^{d} = [0, 2\pi]^{d},
  \end{equation}
  where the effective frequency set $ S_{\rm eff} \subset \mathbb{R}^{d} $ satisfies
  \begin{equation}\label{eq12}
    S_{\rm eff} \subseteq \Delta_{1,1} \times \Delta_{1,2} \times \cdots \times \Delta_{L,m}.
  \end{equation}
  Then for any $ u \in S_{\rm eff} $ we have $ \| u \|_{2} \le \sqrt{d} \, \omega_{\rm max} $ and $ \| u \|_{1} \le d \, \omega_{\rm max} $, and hence there exists a radius $ K $ such that
  \begin{equation}\label{eq21}
    S_{\rm eff} \subseteq B_{2}(K),
    \qquad
    K \le \sqrt{d} \, \omega_{\max},
  \end{equation}
  where $ B_{2}(K) \coloneqq \{u \in \mathbb{R}^{d} \; \mid \; \| u \|_{2} \le K \} $.
  Equivalently, using the $ \ell_{1} $-radius gives the rough bound $ K \lesssim d \, \omega_{\rm max} = Lm \, \omega_{\rm max} $.

  Next, for a target function $ h \, : \, \mathbb{T}^{d} \to \mathbb{C} $, consider its Fourier expansion
  \begin{equation}
    h(\varphi) = \sum_{s \in \mathbb{Z}^{d}} b_{s} \exp{(-i s \cdot \varphi)},
    \qquad b_{s} \in \mathbb{C}, \; \varphi \in \mathbb{T}^{d} = [0, 2\pi]^{d}.
  \end{equation}
  We equip $ L^{2}(\mathbb{T}^{d}) $ with the inner product
  \begin{equation}
    \langle f, h \rangle \coloneqq \frac{1}{(2 \pi)^{d}} \int_{\mathbb{T}^{d}} f(\varphi) \overline{h(\varphi)} \, d \varphi,
    \qquad
    \| f \|_{L^{2}}^{2} \coloneqq \langle f, f\rangle,
  \end{equation}
  where $ \bar{h} $ denotes the complex conjugate of $ h $.
  By the Parseval equality,
  \begin{equation}
    \| h \|_{L^{2}(\mathbb{T}^{d})}^{2} = \sum_{s \in \mathbb{Z}^{d}} | b_{s} |^{2}.
  \end{equation}
  Moreover, $ \| h \|_{W_{2}^{r}} \le 1 $ implies
  \begin{equation}
    \sum_{s \in \mathbb{Z}^{d}}(1 + \| s \|_{2}^{2})^{r} \, | b_{s} |^{2} \le 1.
  \end{equation}
  For any $ K \ge 1 $, the $ L^{2} $-best approximation of $ h $ by $ \mathcal{F}_{K} $ is given by truncating its Fourier series to $ \{ \| s \|_{2} \le K\} $, and hence
  \begin{equation}\label{eq23}
    \inf_{f \in \mathcal{F}_{K}} \| f - h \|_{L^{2}(\mathbb{T}^{d})}^{2} = \sum_{\| s \|_{2} > K} | b_{s} |^{2}.
  \end{equation}

  To obtain a lower bound that is valid in the worst case over $\mathcal{B}_{r}$, we construct a function whose Fourier energy is concentrated on an annulus.
  Let $ A_{K} \coloneqq \{s \in \mathbb{Z}^{d} \; \mid \; K < \| s \|_{2} \le 2K \} $ and set
  \begin{equation}
    b_{s} \coloneqq
    \begin{cases}
      \displaystyle c_{0}\,(1 + \| s \|_{2}^{2})^{-\frac{r}{2}}, & s \in A_{K}, \\ 
      0, & \text{otherwise},
    \end{cases}
  \end{equation}
  where $ c_{0} > 0 $ is chosen so that $ \| h \|_{W_{2}^{r}} \le 1 $.
  Since $ | A_{K} | \asymp K^{d} $, we have
  \begin{equation}
    \sum_{s\in A_{K}}(1 + \| s \|_{2}^{2})^{r} \, | b_{s} |^{2} = c_{0}^{2} \, | A_{K} | \asymp c_{0}^{2} K^{d},
  \end{equation}
  and taking $ c_{0} \asymp K^{-\frac{d}{2}} $ ensures $ \| h \|_{W_{2}^{r}} \le 1 $, i.e., $ h \in \mathcal{B}_{r} $.
  Thus, for this $ h $, Eq. \ref{eq23} yields
  \begin{equation}
    \inf_{f \in \mathcal{F}_{K}} \| f - h \|_{L^{2}(\mathbb{T}^{d})}^{2} = \sum_{s \in A_{K}} | b_{s} |^{2} = c_{0}^{2} \sum_{s \in A_{K}}(1 + \| s \|_{2}^{2})^{-r} \asymp c_{0}^{2} \, |A_{K}| \, K^{-2r},
  \end{equation}
  where the definition of $ A_{K} $ is used.
  Using $ c_{0}^{2} \asymp K^{-d} $ and $ | A_{K} | \asymp K^{d} $, we obtain
  \begin{equation}
    \inf_{f \in \mathcal{F}_{K}} \| f - h \|_{L^{2}(\mathbb{T}^{d})}^{2} \, \ge \, c_{1} \, K^{-2r}
    \quad \Rightarrow \quad
    \inf_{f \in \mathcal{F}_{K}} \| f - h \|_{L^{2}(\mathbb{T}^{d})} \, \ge \, c \, K^{\frac{d}{2} - r},
  \end{equation}
  for $ c_{1} > 0 $ and some constant $ c = c(d,r) > 0 $ after adjusting constants using the lattice-point estimate on annuli\footnote{
    Here,  $ c = c(d, r) = c_{1} K^{\frac{d}{2} + r} $.
  }.
  Taking the supremum over $ h \in \mathcal{B}_{r} $ proves Eq. \ref{eq18}.

  Finally, if $ S_{\rm eff} \subseteq \{ u \; \mid \; \| u \|_{2} \le K \} $, then $ \mathcal{F}_{S_{\rm eff}} \subseteq \mathcal{F}_{K} $, thus Eq. \ref{eq19} follows immediately, and Eq. \ref{eq20} follows from Eq. \ref{eq21}.
\end{proof}

\begin{remark}[When Assumption \ref{assmp1} holds]
  Assumption \ref{assmp1} holds, for example, when each $ H_{j} $ is a Pauli-type observable (or a sum of commuting Pauli strings) with eigenvalues on a rational grid, as is common in hardware-efficient ans\"atze and QAOA-style circuits.
  More generally, one may always rescale $ (\theta_{j}, H_{j}) \mapsto (\gamma \theta_{j}, H_{j}/\gamma) $ without changing the implemented unitary, so the assumption can be seen as fixing a convenient unit for the parameter domain.
\end{remark}

\begin{remark}[Effect of angle rescaling]
  The rescaling $ \varphi_{j} = \gamma_{j} \theta_{j} $ changes the numerical value of the band-limit parameter $ K $ by a constant factor.
  Throughout, $ K $ should be understood with respect to the normalized angles $ \varphi $.
\end{remark}

From Eq. \ref{eq12}, increasing $ S_{\rm eff} $ yields a richer spectrum and thus improves the expressive (approximation) power of the parameterized unitary.
Therefore, when selecting generators from the DLA, we should remove mutually commuting generators, because non-commuting generators enlarge the effective set through the Baker-Campbell-Hausdorff (BCH) expansion, i.e., non-commuting generators do not increase the number of direct products.
This rule serves as a design policy for choosing generators.

Next, for an upper bound on the approximation error, we obtain the following theorem, which provides a Jackson-type upper bound on the best-achievable approximation error in terms of an effective bandwidth associated with the accessible frequency spectrum of the circuit.
We will measure representable low-frequency coverage by $ K_{\rm cov} $, and when the discrete accessible set is hard to characterize, we may lower-bound $ K_{\rm cov} $ using relaxed envelop $ \tilde{S} $ by verifying $ \mathbb{Z}^{d} \cup B_{2}(K) \subset \mathbb{Z}^{d} \cup \tilde{S} $.

\begin{theorem} (Jackson-type upper bound with minimal coverage)\label{th2}
  Let $ F_{S_{\rm eff}} $ be the effective function class defined by the accessible frequency set $ S_{\rm eff} $.
  \begin{equation}
    F_{S_{\rm eff}} \coloneqq \Big\{ f(\varphi) = \sum_{u \in S_{\rm eff}} a_{u} \exp{(-i u \cdot \varphi)} \; \mid \; a_{u} \in \mathbb{C} \Big\}.
  \end{equation}
  Let the target function be $ h(\varphi) = \sum_{s \in \mathbb{Z}^{d}} b_{s} \exp{(-i s \cdot \varphi)} $ ($ \in W_{2}^{r}(\mathbb{T}^{d}) $ with $ r > \frac{d}{2} $).
  Define the coverage radius
  \begin{equation}
    K_{\rm cov} \coloneqq \sup{ \{ K \geq 0 \; \mid \; \mathbb{Z}^{d} \cup B_{2} (K) \subset S_{\rm eff} \} }.
  \end{equation}
  Equivalently, all lattice frequencies with $ \| s \|_{2} \leq K $ are guaranteed to be representable.
  Then, for any $ 0 \leq K \leq K_{\rm cov} $, there exists $ f_{K} \in F_{S_{\rm eff}} $ such that
  \begin{equation}
    \| h - f_{K} \|_{L^{2}(\mathbb{T}^{d})} \leq C K^{\frac{d}{2} - r} \; \| h \|_{W_{2}^{r}(\mathbb{T}^{d})} = C K^{- \alpha} \; \| h \|_{W_{2}^{r}(\mathbb{T}^{d})},
  \end{equation}
  where $ C > 0 $ and $ \alpha \coloneqq r - \frac{d}{2} > 0 $, and hence
  \begin{equation}
    \inf_{f \in \mathcal{F}_{S_{\rm eff}}} \| h - f \|_{L^{2}(\mathbb{T}^{d})} \leq C K^{-\alpha}_{\rm cov} \; \| h \|_{W_{2}^{r}(\mathbb{T}^{d})}.
  \end{equation}
\end{theorem}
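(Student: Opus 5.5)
The plan is to take as $f_K$ the Fourier truncation of $h$ to the lattice ball, exactly as in the proof of Theorem~\ref{th1}, and to bound the tail using the Sobolev weight. Throughout I read the coverage condition as $\mathbb{Z}^{d}\cap B_{2}(K)\subset S_{\rm eff}$, which is what the sentence ``all lattice frequencies with $\|s\|_2\le K$ are representable'' expresses; the symbol $\cup$ is presumably a typo for $\cap$. I use the same normalized inner product and Parseval identity as in Theorem~\ref{th1}, together with the norm $\|h\|_{W_2^r}^2=\sum_{s}(1+\|s\|_2^2)^r|b_s|^2$ used there.

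\textbf{Step 1 (membership).} Fix $0<K\le K_{\rm cov}$ and define $f_K(\varphi)\coloneqq\sum_{\|s\|_2\le K} b_s\exp(-is\cdot\varphi)$. I need every integer $s$ with $\|s\|_2\le K$ to lie in $S_{\rm eff}$.
\begin{itemize}
\item If $K<K_{\rm cov}$, this follows directly from the definition of the supremum.
\item If $K=K_{\rm cov}$, I use that $\mathbb{Z}^{d}\cap B_2(K')$ is a finite set that changes only at the discrete values $\|s\|_2$. Hence $\mathbb{Z}^d\cap B_2(K_{\rm cov})=\mathbb{Z}^d\cap B_2(K')$ for some $K'<K_{\rm cov}$, or it is the union of the sets $\mathbb{Z}^d\cap B_2(K')$ over $K'<K_{\rm cov}$. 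Either way it is contained in $S_{\rm eff}$.
\end{itemize}
Consequently $f_K\in F_{S_{\rm eff}}$, with $a_s=b_s$ on these modes and $a_u=0$ otherwise.

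\textbf{Step 2 (tail estimate).} By Parseval,
\[
\|h-f_K\|_{L^2}^2=\sum_{\|s\|_2>K}|b_s|^2\le \sup_{\|s\|_2>K}(1+\|s\|_2^2)^{-r}\sum_{s}(1+\|s\|_2^2)^r|b_s|^2\le (1+K^2)^{-r}\|h\|_{W_2^r}^2 .
\]
This gives $\|h-f_K\|_{L^2}\le K^{-r}\|h\|_{W_2^r}$. This rate is actually stronger than what the theorem claims.

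\textbf{Step 3 (matching the stated exponent).} Split into two ranges of $K$.
\begin{itemize}
\item For $K\ge1$: since $d/2>0$, we have $K^{-r}\le K^{d/2-r}=K^{-\alpha}$, so $C=1$ suffices.
\item For $0<K<1$: the bound is $K^{-\alpha}\ge1$, and the trivial estimate $\|h-f_K\|_{L^2}\le\|h\|_{L^2}\le\|h\|_{W_2^r}$ already gives it, since $f_K$ is simply a truncation.
\item For $K=0$: the right-hand side is $+\infty$ by convention, so there is nothing to prove.
\end{itemize}
Thus $C=1$ works uniformly. The final inequality follows by taking $K=K_{\rm cov}$ when $K_{\rm cov}<\infty$ (allowed by Step 1) and bounding the infimum by the error of $f_{K_{\rm cov}}$. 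If $K_{\rm cov}=\infty$, then $S_{\rm eff}\supset\mathbb{Z}^d$, letting $K\to\infty$ gives zero error, and the bound holds trivially.

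\textbf{Main obstacle.} The analysis itself is routine. The only real care needed is in interpreting the statement correctly:
\begin{itemize}
\item reading the coverage set as the lattice points inside the ball;
\item checking that the supremum $K_{\rm cov}$ is attained, as in Step 1;
\item handling $K<1$, where $K^{-\alpha}$ is no longer dominated by $K^{-r}$.
\end{itemize}
I expect Step 1's attainment argument to be the point most worth writing out carefully.
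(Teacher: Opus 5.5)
Steps 2--3 are correct, but Step 1's treatment of the endpoint $K=K_{\rm cov}$ is false, and this is exactly the point you flagged as needing care. The supremum defining $K_{\rm cov}$ is never attained when it is finite.

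The map $K'\mapsto\mathbb{Z}^d\cap B_2(K')$ with closed balls is right-continuous, not left-continuous, and $K_{\rm cov}$ always sits on one of its jumps. To see this, let $s_0$ be a lattice point of minimal norm outside $S_{\rm eff}$. Then the admissible set of radii is precisely $[0,\|s_0\|_2)$, so $K_{\rm cov}=\|s_0\|_2$ while $s_0\in\mathbb{Z}^d\cap B_2(K_{\rm cov})$ and $s_0\notin S_{\rm eff}$. For example, $d=1$ and $S_{\rm eff}=\{-1,0,1\}$ give $K_{\rm cov}=2$, but $\pm 2\notin S_{\rm eff}$.

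Neither of your two alternatives holds, because lattice points on the sphere $\|s\|_2=K_{\rm cov}$ lie in no $B_2(K')$ with $K'<K_{\rm cov}$. So the closed-ball truncation $f_{K_{\rm cov}}$ need not belong to $F_{S_{\rm eff}}$, and your derivation of the final inequality by plugging in $K=K_{\rm cov}$ fails as written. The paper's proof has the same endpoint problem, and its closing chain $K^{-\alpha}\lesssim K_{\rm cov}^{-\alpha}$ for $K\le K_{\rm cov}$ actually runs in the wrong direction.

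The repair is immediate with your own estimate. At $K=K_{\rm cov}$, truncate on the open ball: $f=\sum_{\|s\|_2<K_{\rm cov}}b_s\exp(-is\cdot\varphi)$. All its modes lie in $S_{\rm eff}$. The tail is now over $\|s\|_2\ge K_{\rm cov}$, where still $(1+\|s\|_2^2)^{-r}\le(1+K_{\rm cov}^2)^{-r}$, so Steps 2--3 go through verbatim. Equivalently, let $K\uparrow K_{\rm cov}$ and use continuity of $K^{-\alpha}$.

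With that fixed, your route differs from the paper's in the key tail estimate. The paper first bounds each coefficient individually, $|b_s|^2\lesssim\|s\|_2^{-2r}$, and then sums over lattice shells via $\int_K^\infty t^{d-1-2r}\,dt$. That is where $r>\frac{d}{2}$ enters, and it produces exactly the stated exponent $\frac{d}{2}-r$. (The paper's envelope construction with $S_{\rm gap}$, $\tilde S$ and $K\le\sqrt{Lm}\,\omega_{\max}$ is motivational and unused in the estimate; you rightly skip it.)

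You instead pull the weight out of the whole tail, $\sum_{\|s\|_2>K}|b_s|^2\le(1+K^2)^{-r}\|h\|_{W_2^r}^2$, which is the standard Jackson argument. This buys several things:
\begin{enumerate}
\item no lattice-point counting is needed;
\item the rate $K^{-r}$ does not use $r>\frac{d}{2}$;
\item the constant is explicit, $C=1$, after your case split;
\item the rate itself is sharper.
\end{enumerate}
The paper's route loses a factor $K^{d/2}$ because it uses only pointwise decay of the coefficients rather than their weighted summability.

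Your bound also holds uniformly over the unit Sobolev ball. So it shows that the worst-case error over $\mathcal{B}_r$ is at most $K^{-r}$. For large $K$, that is incompatible with the $K$-independent lower bound $cK^{\frac{d}{2}-r}$ asserted in Theorem~\ref{th1}.
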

\begin{proof}
  Any parameter-induced frequency vector can then be viewed component-wise as $ u_{l,j} \in \mathbb{R}^{d} $ with $ d = L \cdot m $, where each component is constrained by the corresponding gap structure.
  This motivates the product-form frequency envelope
  \begin{equation}
    S_{\rm gap} \coloneqq \prod_{l = 1}^{L} \prod_{j = 1}^{m} \Delta_{l,j} \subset \mathbb{R}^{d},
  \end{equation}
  and we take $ S_{\rm eff} \subset S_{\rm gap} $.
  Since $ S_{\rm gap} $ is discrete and may be sparse, for the purpose of deriving approximation upper bounds, it is convenient to introduce a continuous envelope set
  \begin{equation}
    \tilde{S} \coloneqq \prod_{l = 1}^{L} \prod_{j = 1}^{m} [ -\omega_{\rm max}^{(l,j)}, \omega_{\rm max}^{(l,j)} ] \subset \mathbb{R}^{d}.
  \end{equation}
  By construction, $ S_{\rm eff} \subset S_{\rm gap} \subset \tilde{S} $.
  We quantify the effective bandwidth by defining $ K $ as the radius of the smallest Euclidean ball enclosing $ \tilde{S} $ (equivalently, $ \| u \|_{2} \leq K $ for all $ u \in \tilde{S} $), which yields the bound
  \begin{equation}
    K = \left( \sum_{l = 1}^{L} \sum_{j = 1}^{m} (\omega_{\rm max}^{(l, j)})^2 \right)^{\frac{1}{2}} \leq \sqrt{L m} \; \omega_{\rm max}.
  \end{equation}
  In the sequel, we invoke truncation/approximation arguments on the integer lattice modes $ \mathbb{Z}^{d} \cup B_{2}(K) $ (or on $ \mathbb{Z}^{d} \cup \tilde{S} $), which provides a conservative but robust notion of accessible bandwidth controlled solely by the generator spectral widths.
  In the below, we approximate a target $ h $ by its low-frequency truncation supported within the bandwidth $ K $.
  For a target function $ h $, using the Fourier expansion, the target function is expressed as follows:
  \begin{equation}
    h(\varphi) = \sum_{s \in \mathbb{Z}^{d}} b_{s} \exp{( -i s \cdot \varphi )},
  \end{equation}
  where $ \varphi \in \mathbb{R}^{d} $.
  For a bandwidth parameter $ 0 < K \leq K_{\rm cov} $, define the low-frequency truncation
  \begin{equation}
    h_{K} (\varphi) \coloneqq \sum_{\| s \|_{2} \leq K} b_{s} \exp{(-i s \cdot \varphi)}.
  \end{equation}
  By the definition of $ K_{\rm cov} $, we have $ \mathbb{Z}^{d} \cup B_{2}(K) \subset S_{\rm eff} $, and hence $ h_{K} \in F_{S_{\rm eff}} $.
  Since $ h \in W^{r}_{2}(\mathbb{T}^{d}) $,
  \begin{equation}
    \| h \|^{2}_{W^{r}_{2}(\mathbb{T}^{d})} = \sum_{s \in \mathbb{Z}^{d}} (1 + \| s \|^{2})^{r} |b_{s}|^{2} < \infty.
  \end{equation}
  Thus,
  \begin{equation}
    \sum_{\| s \| > K} |b_{s}|^{2} \lesssim \sum_{\| s \| > K} \| s \|^{-2r}.
  \end{equation}
  Here, because the number of grid points of an annulus with the radius $ t $ is $ \asymp t^{d-1} $,
  \begin{equation}
    \sum_{\| s \| > K} \| s \|^{-2r} \lesssim \int_{K}^{\infty} t^{d-1} t^{-2r} \, dt = \int_{K}^{\infty} t^{d-1-2r} \, dt.
  \end{equation}
  Hence,
  \begin{equation}
    \int_{K}^{\infty} t^{d-1} t^{-2r} \, dt = \int_{K}^{\infty} t^{d-1-2r} \, dt =
    \left\{
    \begin{array}{ll}
      \frac{K^{d-2r}}{2r-d} & r > \frac{d}{2}\\
      \infty & r \leq \frac{d}{2}.
    \end{array}
    \right.
  \end{equation}
  Thus, since $ r > \frac{d}{2} $,
  \begin{equation}
    \sum_{\| s \| > K} | b_{s} |^{2} \lesssim K^{d - 2r}.
  \end{equation}
  Using the Parseval equality,
  \begin{equation}
    \| h - h_{K} \|_{W^{r}_{2}(\mathbb{T}^{d})}^{2} = \sum_{\| s \|_{2} > K} | b_{s} |^{2}.
  \end{equation}
  Therefore, the tail bound is $ \sum_{\| s \| > K} | b_{s} |^{2} \lesssim K^{d - 2r} $, and thus $ \| h - h_{K} \|_{W^{r}_{2}(\mathbb{T}^{d})} \lesssim K^{\frac{d}{2} - r} $.
  Finally, since $ h_{K} $ is supported on frequencies with $ \| s \| \leq K $, whenever these modes are representable by the circuit class (i.e., $ h_{K} \in F_{S_{\rm eff}} $), we have the upper bound
  \begin{equation}
    \inf_{f \in \mathcal{F}_{S_{\rm eff}}} \| h - f \|_{L^{2}(\mathbb{T}^{d})} \leq \| h - h_{K} \|_{L^{2}(\mathbb{T}^{d})} \lesssim K^{-\alpha} \lesssim K_{\rm cov}^{-\alpha} \| h \|_{W^{r}_{2}(\mathbb{T}^{d})},
  \end{equation}
  for any $ K \leq K_{\rm cov} $, $ h_{K} \in F_{S_{\rm eff}} $.
  This yields the stated Jackson-type upper bound.
\end{proof}

Regarding $ K_{\rm cov} $ introduced in Theorem \ref{th2}, the imposed restriction may be strong and thus difficult to satisfy in practical quantum circuits.

\begin{remark}
  Based on Theorem \ref{th2}, we consider the limit of the upper bound of approximation error as follows:
  \begin{equation}
    \lim_{r, d \rightarrow \infty} d^{-(r - \frac{d}{2})}.
  \end{equation}
  Since $ d \geq 1 $, $ 0 < d^{-(r - \frac{d}{2})} \leq 1 $ (thus, it does not converge to $ \infty $).
  From the above equation, we obtain the following equation:
  \begin{equation}
    \lim_{r, d \rightarrow \infty} \exp{ \left \{ -\left(r - \frac{d}{2}\right) \ln{ d } \right \} }.
  \end{equation}
  Then, due to the difference of the convergence speed $ r - \frac{d}{2} $ and $ \ln{ d } $ to infinity (and the assumption $ r - \frac{d}{2} > 0 $), we obtain the following limit:
  \begin{equation}
    \lim_{r, d \rightarrow \infty} \exp{ \left \{ -\left(r - \frac{d}{2}\right) \ln{ d } \right \} } = 
    \left\{
    \begin{array}{ll}
      0 & \left(r - \frac{d}{2}\right) \ln{ d } \rightarrow \infty\\
      1 & \left(r - \frac{d}{2}\right) \ln{ d } \rightarrow 0\\
      e^{ -C } & \left(r - \frac{d}{2}\right) \ln{ d } \rightarrow C\\
    \end{array}
    \right.,
  \end{equation}
  where $ C $ is a constant.
  Therefore, the limit of the approximation error is an indeterminate form.
  However, when $ r - \frac{d}{2} \geq D > 0 $ (where $ D $ is any positive constant), $ \left(r - \frac{d}{2}\right) \ln{ d } \rightarrow \infty $, and the limit becomes zero.

  This observation is consistent with the intuition that increasing the effective bandwidth (or low-frequency coverage) improves approximation capability, whereas training may still suffer from BPs when circuits approach (approximate) Haar-random behavior.
\end{remark}

\subsection{Effects of center on training behaviors}
A parameterized unitary matrix $ U(\theta) $ is given by generators $ H_{j} $ as follows:
\begin{equation}
  U(\theta) = \exp{ \left( -i \sum_{j=1}^{m} \theta_{j} H_{j} \right) }.
\end{equation}
Then, a DLA $ \mathfrak{g} $ is defined by
\begin{equation}
  \mathfrak{g} \coloneqq {\rm span}_{\mathbb{R}} \braket{ iH_{1}, \ldots, iH_{m} }_{Lie},
\end{equation}
where $ {\rm span}_{\mathbb{R}} (\mathcal{A}) $ is the set of all linear combinations of elements of $ \mathcal{A} $ with real coefficients, and $ \braket{\mathcal{S}}_{Lie} $ denotes the Lie closure, that is, the set obtained by repeatedly taking the nested commutators between the elements in $ \mathcal{S} $.
Generally, a DLA is decomposed as follows:
\begin{equation}
  \mathfrak{g} = \mathfrak{z} \oplus [\mathfrak{g}, \mathfrak{g}], 
\end{equation}
where $ \mathfrak{z} $ denotes a center (Abelian) and $ [\mathfrak{g}, \mathfrak{g}] $ denotes a semi-simple algebra \cite{ragone2024}.
Here, $ [\mathfrak{g}, \mathfrak{g}] $ is expressed by the direct sum of simple algebras $ \mathfrak{g}_{j} $, i.e., $ \oplus_{j} \mathfrak{g}_{j} $.
Simple algebra such as $ \mathfrak{su} $, which has no center and is non-commutative, has high expressive power and strongly tends to exhibit BPs.
A semi-simple algebra such as $ \mathfrak{su}(N) \oplus \mathfrak{su}(N) $, which is the direct sum of simple algebras, has moderate expressive power.
However, when the Hilbert space can be decomposed by the simple algebras and the gradients of the cost function remain in each subspace, BPs can be mitigated.
When center $ \mathfrak{z} $ exists (which corresponds to a flat space in the gradient space), although the center is unrelated to BPs, BPs can be mitigated because of the parameter space partitioning.
Such an algebra, named a reductive DLA, is expressed as follows:
\begin{equation}
  \mathfrak{g} = \mathfrak{u}(N) = \mathfrak{su}(N) \oplus i \mathbb{R} I.
\end{equation}
In summary, the relationship between a DLA and a BP is presented in Table \ref{tab2-1}.
\begin{table}[htb]
  \centering
  \caption{Relationship between DLA and BP}\label{tab2-1}
  \vspace{5pt}
  \begin{tabular}{ccc}\hline
    DLA & Expressive power & Possibility of BP\\ \hline
    Simple algebra & Strong & High\\
    Semi-simple algebra & Moderate & Low\\
    Reductive algebra & Relatively strong & Dependence on strength of center\\ \hline
  \end{tabular}
\end{table}
Note that the table provides a comparison of the expressive powers and possibilities of a BP of semi-simple and reductive algebras with those of simple algebra.

From the above discussion, regarding generators (reductive DLA) and BPs, an empirical equation can be presented as follows:
\begin{equation}
\eta = \frac{| \; {\rm Tr}[ G ] \; |}{\| \; G \; \|_{{\rm F}}},
\end{equation}
where $ G $ denotes a generator\footnote{
  The $ \mathfrak{g} $-purity \cite{ragone2024} cannot assess the strength of the center's contribution in a reductive DLA.
}.
Here, the computational complexity of $ \eta $ is $ \mathcal{O}(N^{2}) $ for the matrix $ G $ of size $ N $.
When $ \eta $ is kept small, the curvature of the parameter space becomes smooth, and then BPs may be mitigated.
This heuristic metric is not intended to directly assess BPs; that is, it does not provide a necessary and sufficient condition for the occurrence of BPs.
Instead, by assessing the flatness of the parameter space, it allows us to infer the potential presence of BPs during the training process.
We believe that $ \eta $ can serve as a design metric for generators.
We will conduct experiments to demonstrate the effects of $ \eta $.

\section{Experiments and results}
\subsection{Approximation}
In this experiment, we illustrate the practical implication of Theorem \ref{th1}: the accessible Fourier modes of an expectation-value model are determined by eigenvalue differences of the generators, and the worst-case approximation capability is fundamentally limited by the effective frequency radius $ K $, which in turn is controlled by the spectral widths of the generators. 
Although Theorem \ref{th1} is a minimax statement over a Sobolev ball (and does not directly predict optimization outcomes), it motivates the following hypothesis: if the generator spectra of the training model are substantially narrower than those of the target model, then the training model cannot adequately represent the target's higher-frequency components, leading to a larger approximation (and hence prediction) error.

Thus, we investigate how the matching between the eigenvalue spectra of the target model and the training model influences the training process.

\paragraph{Setup.}
Training data were prepared using the target model.
The maximum eigenvalue of the generators of the target model was set to 10.\footnote{
  To construct a generator $ H $ (Hermitian matrix), we use the spectral decomposition as follows:
  \begin{equation*}
    H = U \Lambda U^{\dagger}.
  \end{equation*}
  Here, $ U $ is a unitary matrix and $ \Lambda $ is a diagonal matrix.
  In the experiments, the diagonal elements of $ \Lambda $ are set from $ -b $ to $ b $ at equal intervals, where $ b $ is the maximum eigenvalue.
}
The number of qubits $ n $ was three, and the depth $ L $ was set to five.
To construct the generators (Hermitian matrices), unitary matrices whose size was eight were Haar-randomly sampled.
The number of generators was five (as equal to $ L $).
All parameter $ \theta $s were set to one.
The input data $ x $ of the training data was uniformly randomly sampled from the range $ [-1, 1] $.
The output data $ y $ was $ \braket{\psi \, | \, Z(0) \, | \, \psi} $, where $ \ket{\psi} = U(\theta) \, U_{enta} \, RY(x)^{\otimes n} \ket{0}^{\otimes n} $.
For classical data encoding, we used the rotation Y matrix $ RY $.
In order to construct $ U_{enta} $, we adopted the circuit-block (CB) configuration and controlled X gates (CNOT).
The number of training data was 1000.

The training model was constructed in the same way as the target model.
The output of the model was $ \braket{\varphi \, | \, Z(0) \, | \, \varphi} $, where $ \ket{\varphi} = U(\theta) \, U_{enta} \, RY(x)^{\otimes n} \ket{0}^{\otimes n} $.
Here, $ \theta $ was trainable parameters.
The maximum eigenvalue of the generators of the model was varied in the range $ \{0.1, 1, 10\} $.

The training experiments were performed 10 times with different random seeds.\footnote{
  Quantum circuits were implemented in PennyLane \cite{bergholm2018}, and noiseless and state-vector simulations were performed.
  The optimizer was Adam by JAX \cite{jax2018}.
  The learning rate was $ 0.00001 $.
  The maximum number of epochs was 500.
}

\paragraph{Result.}
Average root-mean-square-error (RMSE) results are shown in Figure \ref{fig3-1}.
\begin{figure}[htbp]
  \centering
  \begin{tabular}{c}
    \includegraphics[width=6cm]{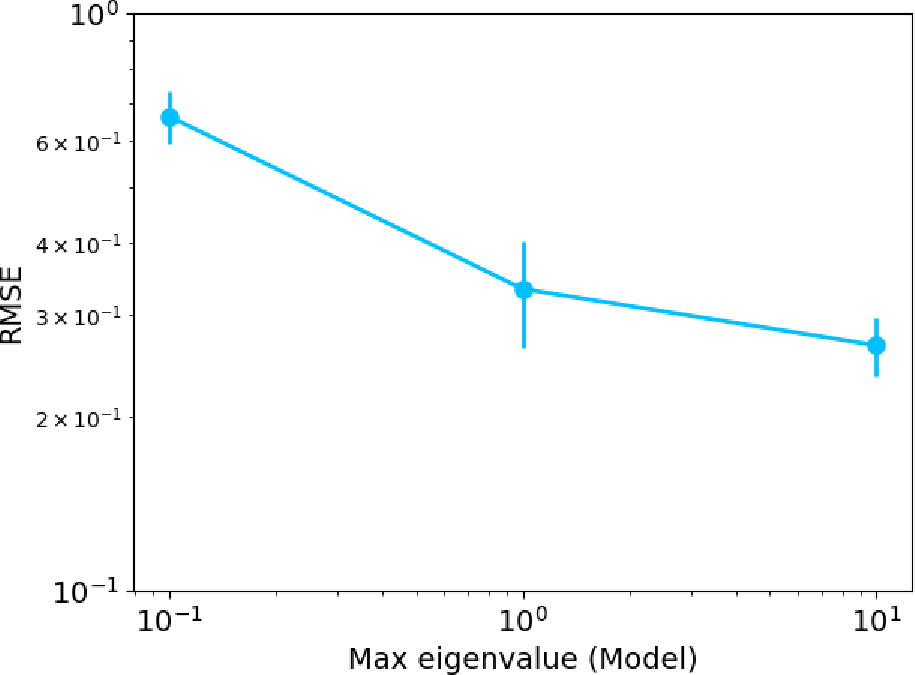}
  \end{tabular}
  \caption{RMSE result for target model with maximum eigenvalue 10 in accordance with maximum eigenvalue of model}\label{fig3-1}
\end{figure}
Error bars indicate one standard deviation.
As can be seen in the figure, the RMSE of the maximum eigenvalue 10 was the lowest.
There was a significant difference between the RMSE for 1.0 and that for 10.\footnote{
  The statistical test results were as follows: p-value = 0.0601 for the Shapiro-Wilk normality test, p-value = 0.0163 ($ < 0.05 $) for the t-test, and p-value = 0.0020 for the exact Wilcoxon signed-rank test.
  Here, the normality assumption was only weakly supported; therefore we performed the exact Wilcoxon signed-rank test.
}
This supports the spectrum-matching hypothesis motivated by Theorem \ref{th1}: enlarging and matching the accessible frequency range of the model improves its ability to approximate the target function represented by the target circuit.
Therefore, we believe that matching between the eigenvalue spectra of the target model and the training model is crucial in the training process.

\subsection{Variance of $ \partial_{\theta} C $}
To see the influence of the center, consider generator $ H $ composed of a semi-simple algebra $ SS $ and center $ Center $ as follows:
\begin{equation}
  H = weight \cdot SS + Center,
\end{equation}
where $ weight $ denotes a balancing factor between $ SS $ and $ Center $, being a value taken in the range $ [0, 1] $.
In the experiments, $ SS $ was $ I \otimes Y $ and $ Center $ was $ I \otimes I $, where $ I $ is the identity matrix and $ Y $ is the Pauli-Y matrix.

The unitary matrix $ U $ is given as follows:
\begin{equation}
  U(\theta) = \exp{(-i \theta H) }.
\end{equation}
Then, the derivative $ \partial_{\theta} U $ of $ U $ is $ -i H \, U(\theta) $.
Here, the cost function $ C $ is $ \braket{0^{\otimes n} \, | \, U^{\dagger} O U \, | \, 0^{\otimes n} } $ and its derivative $ \partial_{\theta} C $ with respect to $ \theta $ is $ \braket{0^{\otimes n} \, | \, \partial_{\theta} U O U + U^{\dagger} O \, \partial_{\theta} U \, | \, 0^{\otimes n} } $, where $ O $ denotes an observable.

\paragraph{Setup.}
The number of qubits $ n $ was two, and $ \theta $ was uniformly randomly sampled from the range $ [-2\pi, 2\pi] $.
$ O $ was $ I \otimes Z $.
$ \partial_{\theta} C $ was computed 50 times with different random seeds\footnote{
  Quantum circuits were implemented by NumPy \cite{harris2020}.
}.

\paragraph{Result.}
Figure \ref{fig3-2} shows the results of the variance of $ \partial_{\theta} C $ and $ \eta $ in accordance with $ weight $.
The error bars for the variance results indicate one standard deviation.
\begin{figure}[htbp]
  \centering
  \begin{tabular}{c}
    \includegraphics[width=6cm]{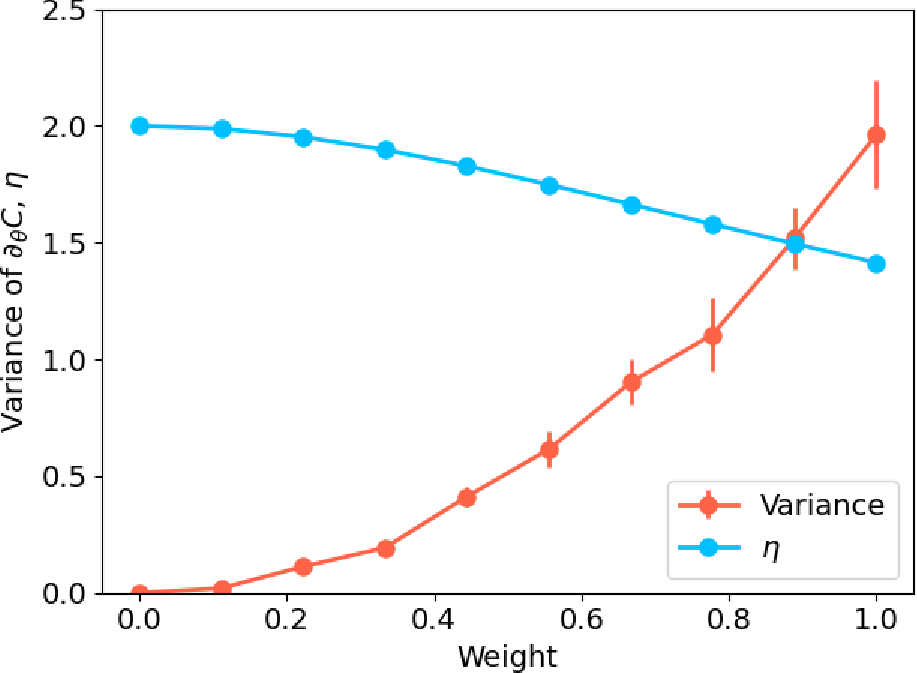}
  \end{tabular}
  \caption{Results of variance of $ \partial_{\theta} C $ and $ \eta $ in accordance with $ weight $}\label{fig3-2}
\end{figure}
When $ weight = 0 $, the Hamiltonian becomes $ H = I \otimes I $, and the variance becomes zero.
Increasing the weight value, i.e., relatively decreasing the importance of $ Center $, leads to an increase in the variance and a decrease in $ \eta $.
Therefore, with respect to the variance, $ \eta $ can serve as a metric for BPs.
Additionally, a decrease in $ \eta $ corresponds to an increase in $ SS $ (the semi-simple algebra), which enhances the expressive power of the unitary matrix.
Thus, we believe that, in generator design, $ \eta $ is a promising metric for assessing the potential presence of BPs.

\section{Conclusion}
Utilizing Lie algebras to construct generators, we presented lower and upper bounds on approximation errors and a generator design metric aimed at mitigating BPs and enhancing the expressive power of quantum circuits.
We obtained the following results:
\begin{itemize}
\item A lower bound on approximation errors and a Jackson-type upper bound on approximation errors of parameterized quantum circuits.
\item A generator selection rule derived from a DLA
\item A metric for inferring the potential occurrence of BPs
\end{itemize}

Theorems \ref{th1} and \ref{th2} are based on the assumption of an integer lattice.
Extending the analysis to non-harmonic Fourier series with real-valued frequency sets is left for future work.

Although the simulation experiments were conducted only on toy problems, the results support the validity of the proposed assessment metric.
To further verify the above findings, additional studies should be needed.
For example, regarding the assessment metric, we will investigate a recommended threshold value that balances BP mitigation and expressive power, as well as the effectiveness of a loss function regularized by the assessment metric.

\bibliographystyle{plain}
\bibliography{references}

\begin{thebibliography}{10}

\bibitem{allcock2024}
Jonathan Allcock, Miklos Santha, Pei Yuan, and Shengyu Zhang.
\newblock On the dynamical lie algebras of quantum approximate optimization
  algorithms.
\newblock {\em arXiv preprint arXiv:2407.12587}, 2024.

\bibitem{bergholm2018}
Ville Bergholm, Josh Izaac, Maria Schuld, Christian Gogolin, Carsten Blank,
  Keri McKiernan, and Nathan Killoran.
\newblock Pennylane: Automatic differentiation of hybrid quantum-classical
  computations.
\newblock {\em arXiv preprint arXiv:1811.04968}, 2018.

\bibitem{jax2018}
James Bradbury, Roy Frostig, Peter Hawkins, Matthew~James Johnson, Chris Leary,
  Dougal Maclaurin, George Necula, Adam Paszke, Jake Vander{P}las, Skye
  Wanderman-{M}ilne, and Qiao Zhang.
\newblock {JAX}: composable transformations of {P}ython+{N}um{P}y programs,
  2018.

\bibitem{casas2023}
Berta Casas and Alba Cervera-Lierta.
\newblock Multidimensional fourier series with quantum circuits.
\newblock {\em Physical Review A}, 107:062612, Jun 2023.

\bibitem{goh2025}
Matthew~L. Goh, Martin Larocca, Lukasz Cincio, M.~Cerezo, and Fr\'ed\'eric
  Sauvage.
\newblock Lie-algebraic classical simulations for quantum computing.
\newblock {\em Physical Review Research}, 7:033266, Sep 2025.

\bibitem{harris2020}
Charles~R. Harris, K.~Jarrod Millman, St{'{e}}fan~J. van~der Walt, Ralf
  Gommers, Pauli Virtanen, David Cournapeau, Eric Wieser, Julian Taylor,
  Sebastian Berg, Nathaniel~J. Smith, Robert Kern, Matti Picus, Stephan Hoyer,
  Marten~H. van Kerkwijk, Matthew Brett, Allan Haldane, Jaime~Fern{'{a}}ndez
  del R{'{\i}}o, Mark Wiebe, Pearu Peterson, Pierre G{'{e}}rard-Marchant, Kevin
  Sheppard, Tyler Reddy, Warren Weckesser, Hameer Abbasi, Christoph Gohlke, and
  Travis~E. Oliphant.
\newblock Array programming with {NumPy}.
\newblock {\em Nature}, 585(7825):357--362, Sep 2020.

\bibitem{heimann2025}
Dirk Heimann, Hans Hohenfeld, Gunnar Sch\"onhoff, Elie Mounzer, and Frank
  Kirchner.
\newblock Learning fourier series with parametrized quantum circuits.
\newblock {\em Physical Review Research}, 7:023151, May 2025.

\bibitem{larocca2025}
Mart{\'i}n Larocca, Supanut Thanasilp, Samson Wang, Kunal Sharma, Jacob
  Biamonte, Patrick~J. Coles, Lukasz Cincio, Jarrod~R. McClean, Zo{\"e} Holmes,
  and M.~Cerezo.
\newblock Barren plateaus in variational quantum computing.
\newblock {\em Nature Reviews Physics}, 7(4):174--189, Apr 2025.

\bibitem{mcclean2018}
Jarrod~R. McClean, Sergio Boixo, Vadim~N. Smelyanskiy, Ryan Babbush, and
  Hartmut Neven.
\newblock Barren plateaus in quantum neural network training landscapes.
\newblock {\em Nature Communications}, 9(1):4812, 2018.

\bibitem{mhiri2025}
Hela Mhiri, Leo Monbroussou, Mario Herrero-Gonzalez, Slimane Thabet, Elham
  Kashefi, and Jonas Landman.
\newblock Constrained and {V}anishing {E}xpressivity of {Q}uantum {F}ourier
  {M}odels.
\newblock {\em {Quantum}}, 9:1847, Sep 2025.

\bibitem{okumura2025}
Shun Okumura and Masayuki Ohzeki.
\newblock Fourier analysis of parameterized quantum circuits and the barren
  plateau problem.
\newblock {\em arXiv preprint arXiv:2309.06740}, 2025.

\bibitem{ragone2024}
Michael Ragone, Bojko~N. Bakalov, Fr{\'e}d{\'e}ric Sauvage, Alexander~F.
  Kemper, Carlos Ortiz~Marrero, Mart{\'i}n Larocca, and M.~Cerezo.
\newblock A {L}ie algebraic theory of barren plateaus for deep parameterized
  quantum circuits.
\newblock {\em Nature Communications}, 15(1):7172, Aug 2024.

\bibitem{schuld2021}
Maria Schuld, Ryan Sweke, and Johannes~Jakob Meyer.
\newblock Effect of data encoding on the expressive power of variational
  quantum-machine-learning models.
\newblock {\em Physical Review A}, 103:032430, Mar 2021.

\end{thebibliography}

\end{document}